\documentclass[conference]{IEEEtran}
\IEEEoverridecommandlockouts
\PassOptionsToPackage{draft,bookmarks=false}{hyperref}

\usepackage{relsize}
\usepackage{cite}
\usepackage{amsmath,amssymb,amsfonts}
\usepackage{algorithmic}
\usepackage{graphicx}
\usepackage[colorlinks]{hyperref}
\usepackage[caption=false,font=footnotesize]{subfig}
\hypersetup{citecolor=blue}
\usepackage{textcomp}
\usepackage{xcolor}
\usepackage{tikz}
\usetikzlibrary{arrows.meta, fit}
\def\BibTeX{{\rm B\kern-.05em{\sc i\kern-.025em b}\kern-.08em
    T\kern-.1667em\lower.7ex\hbox{E}\kern-.125emX}}

\usepackage[font=small,skip=2pt]{caption}

\usepackage{enumitem}
\setlist{nosep}

 \newtheorem{proposition}{Proposition}

\newtheorem{remark}{Remark}
\begin{document}

%\IEEEaftertitletext{\vspace{-6mm}}
\title{Finite-Blocklength ISAC Multiple Access: \\
	A Source-Channel Coding Perspective
%	\vspace{-5mm}
\author{Zhentian Zhang, Kaitao Meng, Hao Jiang, Zaichen Zhang
	% <-this % stops a space
	\thanks{}% <-this % stops a space
	\thanks{Zhentian Zhang, and Zaichen Zhang are with the National Mobile Communications Research Laboratory, Southeast University, Nanjing, 210096, China and Hao Jiang is with the School of Cyber Science and Engineering, Southeast University, Nanjing 210096, P. R. China (e-mail: zhentianzhangzzt@gmail.com, zczhang.seu.edu.cn, jiang.hao@seu.edu.cn ).}% <-this % stops a space
	\thanks{Kaitao Meng is with the Department of Electrical and Electronic Engineering, University of Manchester, Manchester, UK (email: kaitao.meng@manchester.ac.uk).}
}
}

\maketitle
\begin{abstract}
	Future networks must serve massive populations of devices that sense and
	communicate simultaneously under short-packet constraints, yet the
	fundamental limits of integrated sensing and communication (ISAC) in the
	finite-blocklength multiple-access regime remain largely undiscovered. This
	paper closes this gap from a source-channel coding perspective. We prove
	that satisfying a sensing-distortion constraint is information-theoretically
	equivalent to a source-coding requirement, which collapses sensing and
	communication into the joint recovery of a single effective payload within a coded multiple-access framework. Building on this
	equivalence, we derive a finite-blocklength achievability bound together
	with a Fano-sum many-user converse and a genie-aided single-user converse,
	yielding a tight characterization of the minimum energy per bit and the
	rate-sensing tradeoff. Numerical results reveal that the energy price of
	sensing fidelity grows almost linearly in dB per decade of distortion
	tightening and is significantly amplified by the multiple-access load, and
	that joint encoding of the effective payload strictly outperforms an
	optimized orthogonal two-phase scheme, demonstrating a genuine integration
	gain of ISAC at finite blocklength.
\end{abstract}

\begin{IEEEkeywords}
	ISAC, finite-blocklength, sensing distortion, source coding, many-access 
	channel, multiple access, tradeoffs.
\end{IEEEkeywords}

\section{Introduction}
Integrated sensing and communication (ISAC) is reshaping wireless network
design and advancing cognitive networks \cite{ISAC1,ISAC2}. While
ISAC-oriented transceiver designs have demonstrated substantial integration
gains in downlink settings \cite{ISAC3}, attention has recently shifted to
the uplink, where receivers either jointly decode messages and retrieve
sensing information from the transmitted signals
\cite{Uplink_ISAC1,Uplink_ISAC2}, or sequentially decode one user's message
before sensing the state of another \cite{Uplink_ISAC3}. A prominent trend
in uplink ISAC is the transition toward message decoding-oriented sensing,
reflecting the practical reality that uplink systems operate with encoded
finite-blocklength transmissions and thus inherently call for
multiple-access design, particularly at the initial access stage. Existing
work has characterized single-user ISAC tradeoffs in which sensing
parameters are estimated from feedback signals after message decoding
\cite{FBL_ISAC2}, and more recently \cite{FBL_ISAC1} studied uplink ISAC
multiple access based solely on communication-signal-enabled sensing.
Nevertheless, a fundamental analytical framework for uplink-only,
feedback-free ISAC under varying {\em access density}, a setting that
faithfully captures practical uplink operation, remains absent, and it is
unclear how the sensing fidelity, the decoding reliability, and the energy
budget jointly scale when many short-packet users access the channel
simultaneously.

In this paper, we close this gap by revisiting the classical Gaussian
multiple-access channel (MAC) in the finite-blocklength regime and
reformulating uplink transmission as a joint sensing-and-communication
problem from a source-channel coding perspective, which enables a unified
treatment of reliability, sensing performance, and energy efficiency. Our
main contributions are summarized as follows.
\begin{itemize}
	\item We prove that distortion-constrained sensing over finite
	blocklengths is information-theoretically equivalent to a source-coding
	problem under both mutual-information and joint rate-distortion
	interpretations, which folds the sensing target into a
	sensing-equivalent payload.
	\item We formulate joint sensing-and-communication recovery in terms of
	an effective information load, yielding a tight characterization of
	energy efficiency, namely the minimum energy \cite{energy-bound}
	required to meet prescribed decoding-error and sensing-distortion
	targets, and rendering the rate-sensing tradeoff analytically
	tractable.
	\item We derive finite-blocklength achievability bounds together with
	single-user and many-user converse bounds, and demonstrate both
	analytically and numerically that they tightly sandwich the fundamental
	limits across a wide range of access densities.
\end{itemize}

\begin{figure}[t!]
	\centering
	\setlength{\fboxsep}{4pt}%
	\setlength{\fboxrule}{0.6pt}%
	\fbox{\resizebox{\dimexpr\columnwidth-2\fboxsep-2\fboxrule\relax}{!}{%
			\begin{tikzpicture}[font=\scriptsize, line width=0.55pt,
				arr/.style={-{Stealth[length=1.6mm]}},
				msg/.style={draw=black, fill=white,   rounded corners=1.5pt,
					align=center, inner sep=3pt, minimum height=6mm, minimum width=17mm},
				cmp/.style={draw=black, fill=gray!12, rounded corners=1.5pt,
					align=center, inner sep=3pt, minimum height=6mm},
				enc/.style={draw=black, fill=gray!12, rounded corners=1.5pt,
					align=center, inner sep=3pt, minimum height=11mm, minimum width=12mm},
				dec/.style={draw=black, fill=gray!12, rounded corners=1.5pt, thick,
					align=center, inner sep=3pt, minimum height=9mm},
				outbox/.style={draw=gray!70, fill=white, rounded corners=1.5pt,
					align=left, inner sep=3.5pt},
				usr/.style={draw=gray!60, dashed, rounded corners=2.5pt, inner sep=4pt},
				tag/.style={font=\tiny, text=gray!70!black, fill=white, inner sep=1.5pt}]
				
				% ================= user 1 =================
				\node[msg] (w1) at (0,2.85)    {message $w_1$};
				\node[msg] (t1) at (0,2.05)    {sensing $\Theta_1$};
				\node[cmp] (r1) at (1.80,2.05) {lossy\\[-1pt] compression};
				\node[enc] (e1) at (3.45,2.45) {Encoder $1$};
				\draw[arr] (t1) -- (r1);
				\draw[arr] (w1.east) -- (e1.west |- w1);
				\draw[arr] (r1.east) -- (e1.west |- r1);
				\node[usr, fit=(w1)(t1)(r1)] (g1) {};
				\node[tag] at ([xshift=7mm]g1.north west) {user $1$};
				
				% ================= user k_a =================
				\node[msg] (wK) at (0,0.40)     {message $w_{k_a}$};
				\node[msg] (tK) at (0,-0.40)    {sensing $\Theta_{k_a}$};
				\node[cmp] (rK) at (1.80,-0.40) {lossy\\[-1pt] compression};
				\node[enc] (eK) at (3.45,0.00)  {Encoder $k_a$};
				\draw[arr] (tK) -- (rK);
				\draw[arr] (wK.east) -- (eK.west |- wK);
				\draw[arr] (rK.east) -- (eK.west |- rK);
				\node[usr, fit=(wK)(tK)(rK)] (gK) {};
				\node[tag] at ([xshift=7mm]gK.north west) {user $k_a$};
				
				\node[gray] at (0.90,1.22) {$\vdots$};
				\node[gray] at (3.45,1.22) {$\vdots$};
				
				% ================= channel =================
				\node[draw=black, circle, inner sep=1.2pt] (sum) at (5.05,1.22) {$+$};
				\draw[arr] (e1.east) -- (sum);
				\draw[arr] (eK.east) -- (sum);
				\node[font=\scriptsize] (z) at (5.05,0.40) {noise};
				\draw[arr] (z) -- (sum);
				\node[font=\tiny\itshape, text=gray!70!black] at (5.05,0.02) {Gaussian MAC};
				
				% ================= joint decoder =================
				\node[dec] (dec) at (7.40,1.22) {Joint decoder\\[-1pt] \& estimator};
				\draw[arr] (sum) -- node[below=1pt, midway, font=\tiny] {$n$ channel uses} (dec);
				\node[outbox, anchor=west] (res) at (8.55,1.22)
				{decoded messages\\[-1pt] {\quad\footnotesize error prob.\ $\le\epsilon$}\\[2pt]
					estimated parameters\\[-1pt] {\quad\footnotesize distortion $\le\epsilon_s$}};
				\draw[arr] (dec) -- (res);
				
				% ================= key message =================
				\node[draw=gray!60, dashed, rounded corners=2pt,
				align=center, text width=9.0cm, inner sep=4pt] at (5.05,-1.75)
				{\emph{Source--channel coding view.} Each sensed parameter is
					compressed into a few extra bits and embedded in the user's
					message, so sensing and communication share a single
					finite-blocklength channel code.};
			\end{tikzpicture}%
	}}
	\caption{Block diagram of the considered finite-blocklength ISAC multiple-access system.}
	\label{fig:system_model}
\end{figure}

The remainder of this paper is organized as follows:
Section~\ref{sec.2} presents the information-theoretic system model.
Section~\ref{sec.standard_ka_jsac_mac} establishes the source-coding
equivalence and derives the achievability and converse bounds for the
standard $k_a$-user ISAC MAC. Section~\ref{sec.4} provides numerical
results, and Section~\ref{sec.5} concludes the paper.

\section{Information-Theoretic Modeling}\label{sec.2}

We consider an uplink joint communication-and-remote-sensing system in which $k_a$ active
users transmit over an $n$-dimensional real AWGN multiple-access channel (MAC). Each user
$k$ holds two pieces of information: a digital message $w_k\in\mathcal{W}$ carrying $b$
bits (so $|\mathcal{W}|=2^b$), and a locally acquired sensing parameter
$\Theta_k\in\mathcal{T}\subseteq\mathbb{R}^d$, drawn i.i.d.\ across users and independent
of the messages. Since the channel input of each user depends on both quantities, the
received signal is
\begin{equation}
	Y=\sum_{k=1}^{k_a} X_k(w_k,\Theta_k)+Z,
	\label{eq:received_signal}
\end{equation}
where $Y\in\mathbb{R}^n$, $X_k\in\mathbb{R}^n$ is the transmitted vector of user $k$
subject to the block-energy constraint $\|X_k\|_2^2\le p$, and
$Z\sim\mathcal{N}(0,I_n)$ is AWGN. The parameter $\Theta_k$ is known at encoder $k$ but
\emph{not} at the receiver. Under this convention, the communication-oriented
energy-per-bit ratio is 
\begin{equation}
	\frac{e_b}{n_0} = \frac{p}{2b}.
\end{equation}

Upon observing $Y$, the receiver must (i) recover each message $w_k$ reliably, and
(ii) reconstruct each $\Theta_k$ within a mean-squared-error target,
\begin{equation}
	\mathbb{E}\big[\|\Theta_k-\widetilde{\Theta}_k\|_2^2\big]\le \epsilon_s.
	\label{eq:sensing_distortion}
\end{equation}
The system is thus a \emph{joint lossless-message and lossy-source transmission} problem
over a MAC.

The remainder of this section establishes a single, simple takeaway: \emph{from a converse perspective, the sensing task behaves exactly like an additional digital payload}. Concretely, we show in three steps that
\begin{enumerate}
	\item Meeting the distortion target $\epsilon_s$ forces the receiver to acquire at least
	$b_s$ bits of information about $\Theta_k$, where $b_s$ is determined by rate-distortion
	theory (Proposition~\ref{prop1});
	\item Because the message and the sensing parameter are independent, the two information
	requirements \emph{add up}, yielding a total budget of $b+b_s$ bits per user
	(Proposition~\ref{prop2});
	\item This additivity is exactly the joint rate-distortion function of the pair
	$(w,\Theta)$. The quantity $b+b_s$ then serves as the effective per-user payload in the
	finite-blocklength analysis of the next section.
\end{enumerate}

\paragraph{Sensing Accuracy as a Source-Coding Requirement}\label{sec.sensing_bits}
The intuition is the following: if the receiver can reproduce $\Theta$ within distortion
$\epsilon_s$, then the channel output $Y$ must already ``contain'' as much information
about $\Theta$ as the best lossy compressor operating at that distortion, i.e., no estimator
can do better than what rate-distortion theory permits. 

Formally, for squared-error
distortion $\mathcal{D}_s(\Theta,\widetilde{\Theta})=\|\Theta-\widetilde{\Theta}\|_2^2$,
the minimum number of bits needed to describe $\Theta$ within average distortion
$\epsilon_s$ is the rate-distortion function~\cite{shannon1959,berger1971,cover2006}
\begin{equation}
	\mathcal{R}_{\Theta}(\epsilon_s)
	=
	\inf_{\mathbb{P}_{\widetilde{\Theta}|\Theta}:\,
		\mathbb{E}[\mathcal{D}_s(\Theta,\widetilde{\Theta})]\le \epsilon_s}
	\mathcal{I}(\Theta;\widetilde{\Theta}),
	\label{eq:rd_sensing}
\end{equation}
where $\mathcal{I}(\cdot;\cdot)$ is mutual information in bits. We call
$b_s\triangleq \mathcal{R}_{\Theta}(\epsilon_s)$ the \emph{sensing-equivalent number of
	bits}\,\eqref{eq:bs_def}.
% keep the label available:
\begin{equation}
	b_s\triangleq \mathcal{R}_{\Theta}(\epsilon_s).
	\label{eq:bs_def}
\end{equation}

\begin{proposition}[Sensing as a Source-Coding Requirement]\label{prop1}
	If an estimator $\widetilde{\Theta}=\mathcal{G}_s(Y)$ satisfies
	$\mathbb{E}\big[\|\Theta-\widetilde{\Theta}\|_2^2\big]\le \epsilon_s$, then
	$\mathcal{I}(\Theta;Y)\ge b_s$.
\end{proposition}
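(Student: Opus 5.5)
The plan is the standard converse argument via the data-processing inequality, combined with the definition of the rate-distortion function in \eqref{eq:rd_sensing}. The key observation is that the estimator $\widetilde{\Theta}=\mathcal{G}_s(Y)$ induces a particular conditional law $\mathbb{P}_{\widetilde{\Theta}|\Theta}$. This law is then shown to be feasible in the infimum defining $\mathcal{R}_{\Theta}(\epsilon_s)$.

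\textbf{Step 1: the Markov chain.} We first identify
\[
\Theta \to Y \to \widetilde{\Theta}.
\]
By \eqref{eq:received_signal}, $Y$ depends on $\Theta=\Theta_k$ through $X_k(w_k,\Theta_k)$. The other users' inputs, the message $w_k$ and the noise $Z$ act as additional randomness. The estimator is a deterministic function of $Y$. Even if $\mathcal{G}_s$ were randomized with randomness independent of everything else, $\widetilde{\Theta}$ would still be conditionally independent of $\Theta$ given $Y$. The data-processing inequality then gives
\[
\mathcal{I}(\Theta;Y)\ge \mathcal{I}(\Theta;\widetilde{\Theta}).
\]

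\textbf{Step 2: feasibility in the rate-distortion problem.} Let $\mathbb{P}^\star_{\widetilde{\Theta}|\Theta}$ be the kernel obtained by composing the channel law $\mathbb{P}_{Y|\Theta}$, after marginalizing over the independent $w_k$, the other users and $Z$, with the estimator $\mathcal{G}_s$. By hypothesis this kernel satisfies $\mathbb{E}[\|\Theta-\widetilde{\Theta}\|_2^2]\le\epsilon_s$, so it lies in the feasible set of \eqref{eq:rd_sensing}. Hence
\[
\mathcal{I}(\Theta;\widetilde{\Theta})\ge \mathcal{R}_{\Theta}(\epsilon_s)=b_s.
\]
Chaining this with Step 1 gives $\mathcal{I}(\Theta;Y)\ge b_s$.

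\textbf{Main obstacle.} The only delicate point is measure-theoretic. Since $\Theta$ takes values in $\mathcal{T}\subseteq\mathbb{R}^d$ and may be continuous, mutual information must be taken in the general (Gelfand--Yaglom--Perez) sense. Data processing for general Markov chains must also be justified there. I would invoke the standard general-alphabet data-processing inequality, which holds for arbitrary measurable kernels.

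I would also note one subtlety in defining the kernel in Step 2. It must use the marginal over $w_k$ and the interference, rather than conditioning on them. This is legitimate because $\Theta_k$ is independent of $w_k$, of the other users' pairs $(w_j,\Theta_j)$, and of $Z$, so the induced joint law of $(\Theta,\widetilde{\Theta})$ is well defined with the correct source marginal $\mathbb{P}_\Theta$. No finite-blocklength or coding-specific reasoning is needed, since the statement is a single-shot converse.
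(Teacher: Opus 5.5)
Your proof is correct and follows the paper's argument exactly: the Markov chain $\Theta\to Y\to\widetilde{\Theta}$ with data processing, then feasibility of the induced kernel in the infimum defining $\mathcal{R}_{\Theta}(\epsilon_s)$. Your extra remarks on general-alphabet mutual information and randomized estimators are sound refinements that do not change the route. The independence of $\Theta_k$ from $w_k$, the interference and $Z$ is not needed for feasibility, since any joint law of $(\Theta,\widetilde{\Theta})$ with $\Theta$-marginal $\mathbb{P}_\Theta$ induces an admissible kernel.
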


\begin{IEEEproof}
	Since $\widetilde{\Theta}$ is computed from $Y$ alone, $\Theta\to Y\to\widetilde{\Theta}$
	forms a Markov chain, and the data-processing inequality gives
	$\mathcal{I}(\Theta;Y)\ge\mathcal{I}(\Theta;\widetilde{\Theta})$. Because
	$\widetilde{\Theta}$ meets the distortion constraint, it is feasible for the infimum in
	\eqref{eq:rd_sensing}, whence
	$\mathcal{I}(\Theta;\widetilde{\Theta})\ge\mathcal{R}_{\Theta}(\epsilon_s)=b_s$.
\end{IEEEproof}

\begin{remark}
	{\em Proposition~\ref{prop1} is a converse statement: it does \emph{not} require the
		transmitter to quantize $\Theta$ into an explicit $b_s$-bit index, nor does it
		presuppose source-channel separation. It only says that \emph{any} architecture meeting
		the sensing target must deliver at least $b_s$ bits of information about $\Theta$
		through the channel.}
\end{remark}

As a running example, if $\Theta\sim\mathcal{N}(0,\sigma_{\theta}^2 I_d)$, the classical
quadratic-Gaussian rate-distortion function~\cite{berger1971,cover2006} gives the closed form
\begin{equation}
	b_s
	=
	\mathcal{R}_{\Theta}(\epsilon_s)
	=
	\frac{d}{2}\,
	\log_2^+\!
	\left(
	\frac{d\sigma_{\theta}^2}{\epsilon_s}
	\right),
	\label{eq:gaussian_rd}
\end{equation}
where $\log_2^+(x)=\max\{\log_2 x,0\}$. As expected, a more stringent sensing accuracy
(smaller $\epsilon_s$) translates into a larger sensing-equivalent payload $b_s$, i.e., a
heavier burden on the joint decoder.

\paragraph{Additivity of Communication and Sensing Information}
We now show that the two information requirements do not interfere with each other: under
the independence $w\perp\!\!\!\perp\Theta$, the receiver must extract $b$ bits for the
message \emph{and} $b_s$ bits for the sensing parameter, and these requirements are
\emph{additive}. Intuitively, knowing the message tells the receiver nothing about the
sensing parameter, so neither task can ``subsidize'' the other. Note that the lossless
message itself fits the same rate-distortion language: viewing $w$ as a uniform discrete
source with zero-distortion requirement, $\mathcal{R}_{w}(0)=\mathcal{H}(w)=b$.

\begin{proposition}\label{prop2}
	Suppose $w\perp\!\!\!\perp\Theta$. If a joint decoder recovers $w$ with error
	probability $p_e$ and estimates $\Theta$ with mean-squared error at most $\epsilon_s$,
	then
	\begin{equation}
		\mathcal{I}(w,\Theta;Y)
		\ge
		b+b_s-\mathcal{H}_2(p_e)-p_e b,
		\label{eq:joint_lower_bound}
	\end{equation}
	where $\mathcal{H}_2(\cdot)$ is the binary entropy function; in particular,
	$\mathcal{I}(w,\Theta;Y)\ge b+b_s-o(1)$ as $p_e\to0$.
\end{proposition}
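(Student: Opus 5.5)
Starting from the chain rule, I will write $\mathcal{I}(w,\Theta;Y)=\mathcal{I}(w;Y)+\mathcal{I}(\Theta;Y\mid w)$ and bound each term separately. The message term is handled by Fano's inequality. Since $\hat w=\mathcal{G}_c(Y)$ is a function of $Y$ with $\Pr[\hat w\neq w]=p_e$ and $|\mathcal{W}|=2^b$, we have $\mathcal{H}(w\mid Y)\le\mathcal{H}(w\mid\hat w)\le\mathcal{H}_2(p_e)+p_e\log_2(2^b-1)\le \mathcal{H}_2(p_e)+p_e b$. Here I assume, as in the model, that $w$ is uniform, so $\mathcal{H}(w)=b$. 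This gives $\mathcal{I}(w;Y)\ge b-\mathcal{H}_2(p_e)-p_e b$.

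For the sensing term, independence is what matters: $\mathcal{I}(\Theta;Y\mid w)=\mathcal{I}(\Theta;Y,w)-\mathcal{I}(\Theta;w)=\mathcal{I}(\Theta;Y,w)$ because $w\perp\!\!\!\perp\Theta$. Next I apply data processing along $\Theta\to(Y,w)\to Y\to\widetilde\Theta$. Since $\mathcal{I}(\Theta;Y,w)\ge\mathcal{I}(\Theta;Y)$ and $\widetilde\Theta=\mathcal{G}_s(Y)$ meets the distortion target, Proposition~\ref{prop1} yields $\mathcal{I}(\Theta;Y)\ge b_s$. Hence $\mathcal{I}(\Theta;Y\mid w)\ge b_s$.

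Adding the two bounds gives \eqref{eq:joint_lower_bound}. The asymptotic statement then follows because $\mathcal{H}_2(p_e)+p_e b\to0$ as $p_e\to0$ with $b$ fixed.

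The one step that needs care is the sensing term. A naive split such as $\mathcal{I}(\Theta;Y\mid w)\ge\mathcal{I}(\Theta;Y)$ is false in general, so it cannot be used directly. It becomes valid only after rewriting the term as $\mathcal{I}(\Theta;Y,w)$, which is exactly where the independence $w\perp\!\!\!\perp\Theta$ enters. With that rewriting in place, Proposition~\ref{prop1} applies unchanged, and there is no need to condition the estimator on $w$.
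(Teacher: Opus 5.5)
Your proof is correct. It shares the paper's skeleton: the chain rule, then Fano for $\mathcal{I}(w;Y)$ exactly as in the paper. For the sensing term, however, you take a different and shorter route.

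\textbf{The paper's route.} The paper keeps the conditioning throughout. It applies the conditional data-processing inequality to get $\mathcal{I}(\Theta;Y\mid w)\ge\mathcal{I}(\Theta;\widetilde{\Theta}\mid w)$. It then uses independence to argue that, for each realization $w=a$, the conditional law of $(\Theta,\widetilde{\Theta})$ is feasible for $\mathcal{R}_{\Theta}(d_a)$, where $d_a$ is the conditional distortion. Finally, it needs convexity of $\mathcal{R}_{\Theta}(\cdot)$ (via Jensen) and monotonicity to pass from $\mathbb{E}_w[\mathcal{R}_{\Theta}(d_w)]$ to $\mathcal{R}_{\Theta}(\epsilon_s)$.

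\textbf{Your route.} You use independence only once, through $\mathcal{I}(\Theta;w)=0$, to rewrite $\mathcal{I}(\Theta;Y\mid w)$ as $\mathcal{I}(\Theta;Y,w)\ge\mathcal{I}(\Theta;Y)$. You then cite Proposition~\ref{prop1}. This avoids both the convexity of the rate-distortion function and the bookkeeping of the conditional distortions $d_a$.

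\textbf{What each buys.} Your version degrades gracefully if independence fails: the sensing term is still at least $b_s-\mathcal{I}(\Theta;w)$. It also extends directly to the many-user bound \eqref{eq:sensing_info_lower_jsac}. Take the $\Theta_k$ mutually independent and independent of $w^{k_a}$. Apply the same identity once to drop $w^{k_a}$, and once per user to drop $\Theta^{k-1}$ inside the chain rule. This gives $\mathcal{I}(\Theta^{k_a};Y\mid w^{k_a})\ge\sum_k\mathcal{I}(\Theta_k;Y)\ge k_a b_s$. The paper's per-realization argument is instead phrased directly in terms of the conditional distortion constraint. That is the form the paper points to for the lower bound of the joint rate-distortion identity \eqref{eq:rd_additivity}, so its structure mirrors that identity more closely.
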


\begin{IEEEproof}
	By the chain rule,
	$\mathcal{I}(w,\Theta;Y)=\mathcal{I}(w;Y)+\mathcal{I}(\Theta;Y|w)$, and we bound the
	two terms separately.
	\emph{Communication term.} Fano's inequality~\cite{fano1961,cover2006} yields
	$\mathcal{H}(w|Y)\le\mathcal{H}_2(p_e)+p_e b$, and since $\mathcal{H}(w)=b$,
	\begin{equation}
		\mathcal{I}(w;Y)\ge b-\mathcal{H}_2(p_e)-p_e b.
		\label{eq:comm_lower_bound}
	\end{equation}
	
	\emph{Sensing term.} Let $\widetilde{\Theta}$ be the decoder's sensing estimate. Since
	$\widetilde{\Theta}$ is a function of $Y$, the conditional data-processing inequality
	gives $\mathcal{I}(\Theta;Y|w)\ge\mathcal{I}(\Theta;\widetilde{\Theta}|w)$. The key
	observation is that, by $w\perp\!\!\!\perp\Theta$, conditioning on $w=a$ does not alter
	the distribution of $\Theta$, so the rate-distortion function
	$\mathcal{R}_{\Theta}(\cdot)$ applies unchanged under each conditioning event. Writing
	$d_a=\mathbb{E}[\|\Theta-\widetilde{\Theta}\|_2^2\,|\,w=a]$, the distortion constraint
	reads $\mathbb{E}_w[d_w]\le\epsilon_s$, and therefore
	\begin{equation}
		\mathcal{I}(\Theta;\widetilde{\Theta}|w)
		\ge
		\mathbb{E}_{w}\!\left[\mathcal{R}_{\Theta}(d_w)\right]
		\ge
		\mathcal{R}_{\Theta}\!\left(\mathbb{E}_{w}[d_w]\right)
		\ge
		\mathcal{R}_{\Theta}(\epsilon_s)=b_s,
		\label{eq:sensing_lower_bound}
	\end{equation}
	where the second inequality is Jensen's inequality applied to the convex function
	$\mathcal{R}_{\Theta}(\cdot)$, and the last holds because
	$\mathcal{R}_{\Theta}(\cdot)$ is non-increasing. Combining
	\eqref{eq:comm_lower_bound} and \eqref{eq:sensing_lower_bound} with the chain rule
	completes the proof.
\end{IEEEproof}

\begin{remark}
	{\em Proposition~\ref{prop2} is architecture-independent: it applies to arbitrary joint
		decoding rules and does not rely on source-channel separation. Its operational message
		is that, at the level of converse bounds, each active user effectively carries a
		payload of $b+b_s$ bits.}
\end{remark}

\paragraph{Equivalent Joint Rate-Distortion Interpretation}
The additivity in Proposition~\ref{prop2} admits a clean source-coding restatement.
Defining the joint source $S=(w,\Theta)$ with reconstruction
$\widetilde{S}=(\hat{w},\widetilde{\Theta})$, distortion measures
$\mathcal{D}_c(w,\hat{w})=\mathrm{1}\{w\neq\hat{w}\}$ and
$\mathcal{D}_s(\Theta,\widetilde{\Theta})=\|\Theta-\widetilde{\Theta}\|_2^2$, the joint
rate-distortion function factorizes:
\begin{equation}
	\mathcal{R}_{w,\Theta}(0,\epsilon_s)
	=
	\mathcal{R}_{w}(0)+\mathcal{R}_{\Theta}(\epsilon_s)
	=
	b+b_s.
	\label{eq:rd_additivity}
\end{equation}
\begin{IEEEproof}
The proof of \eqref{eq:rd_additivity} is standard and is only sketched here: the lower
bound follows from the chain rule and the data-processing inequality, mirroring the proof
of Proposition~\ref{prop2}. The upper bound follows by choosing the product reconstruction
law
$\mathbb{P}_{\hat{w},\widetilde{\Theta}|w,\Theta}
=\mathbb{P}_{\hat{w}|w}\,\mathbb{P}_{\widetilde{\Theta}|\Theta}$
with each factor achieving its individual rate-distortion function, which is feasible
precisely because $w\perp\!\!\!\perp\Theta$ and the two distortion constraints are
separable. We emphasize that \eqref{eq:rd_additivity} is conceptual: it characterizes the
minimum information rate compatible with both tasks, rather than mandating a separate
quantize-and-transmit implementation. With the effective payload $b+b_s$ in hand, the
next section quantifies the channel-side cost of delivering it over the unsourced MAC at
finite blocklength.
\end{IEEEproof}
\section{Achievability and Converse Bounds \\for the Standard $k_a$-User ISAC MAC}
\label{sec.standard_ka_jsac_mac}

This section quantifies the channel-side cost of delivering the effective payload
identified in the previous section over a standard $k_a$-user MAC~\cite{many_access},
in which each active user has its \emph{own} codebook. The development consists of
three parts: 
\begin{enumerate}
	\item A random-coding achievability bound
	(finite-blocklength form in \eqref{eq:finite_n_achievability_jsac}, asymptotic exponent
	in \eqref{eq:asymptotic_exponent_jsac}), in which the sensing requirement enters
	\emph{only} by enlarging the decoding search space from $2^b$ to $2^{b+b_s}$;
	\item Two converse bounds: One based on Fano's inequality and the MAC sum capacity, and the
	other on a genie-aided single-user reduction; and  \item Combination of derived bounds into a
	minimum-required energy-per-bit \eqref{eq:final_ebn0_lower_jsac} for a target pair
	$(\epsilon,\epsilon_s)$ of communication error and sensing distortion.
\end{enumerate}

Recall from \eqref{eq:bs_def} that $b_s=\mathcal{R}_{\Theta}(\epsilon_s)$, so the
effective payload per active user is $b_{\mathrm{e}}\triangleq b+b_s$, carried by the
effective index set $\mathcal{U}=\mathcal{W}\times\mathcal{V}_s$, where $\mathcal{V}_s$
is a sensing source-code index set with $|\mathcal{V}_s|=2^{b_s}$. Hence
\begin{equation}
	|\mathcal{U}|
	=
	2^{b+b_s}
	=
	2^{b_{\mathrm{e}}}.
	\label{eq:effective_search_space}
\end{equation}
(Integer-rounding effects in $2^{b_s}$ are ignored throughout; replacing $2^{b_s}$ by
$\lceil 2^{b_s}\rceil$ does not affect any asymptotic conclusion.) The key tension in the
ISAC setting is that the search space grows to $2^{b+b_s}$ while the physical
block-energy budget remains the communication budget $p$. To keep this distinction
explicit, let $p'\triangleq p/n$ denote the per-channel-use energy, and for the
random-coding ensemble choose
\begin{equation}
	0<\bar p<p,~
	\bar p'
	\triangleq
	\frac{\bar p}{n},
	\label{eq:pbar_prime_def}
\end{equation}
where $p,\bar p$ are block energies and $p',\bar p'$ their per-channel-use counterparts.
The effective system spectral efficiency is
\begin{equation}
	s_{\mathrm{e}}
	=
	\frac{k_a(b+b_s)}{n}.
	\label{eq:effective_spectral_efficiency}
\end{equation}
In the many-user asymptotic regime we set $k_a=\mu n$ with user density $\mu$, so that
$s_{\mathrm{e}}=\mu(b+b_s)$.

\paragraph{Random-Coding Achievability Bound}
For each user $k=1,\ldots,k_a$, generate an independent Gaussian codebook
$\mathcal{C}_k=\{C_k(u):u\in\mathcal{U}\}$ with
$C_k(u)\sim\mathcal{N}(0,\bar p'I_n)$. Since a Gaussian codeword does not automatically
satisfy the block-energy constraint, the transmitted vector is defined through
truncation:
\begin{equation}
	X_k(u)
	=
	\begin{cases}
		C_k(u),
		&
		\|C_k(u)\|_2^2\le p,\\[1mm]
		0,
		&
		\|C_k(u)\|_2^2>p,
	\end{cases}
	\label{eq:truncated_codeword_jsac}
\end{equation}
so that $\|X_k(u)\|_2^2\le p$ holds deterministically. The receiver applies the
nearest-neighbor decoder over the \emph{ordered} effective index tuple:
\begin{equation}
	(\hat u_1,\ldots,\hat u_{k_a})
	=
	\operatorname*{arg\,min}_{(\widetilde u_1,\ldots,\widetilde u_{k_a})\in\mathcal{U}^{k_a}}
	\left\|
	Y-\sum_{k=1}^{k_a}C_k(\widetilde u_k)
	\right\|_2^2.
	\label{eq:nn_decoder_standard_jsac}
\end{equation}
After $\hat u_k=(\hat w_k,\hat v_k)$ is decoded, $\hat w_k$ is the communication
estimate, and the sensing estimate is obtained by applying the sensing source decoder to
$\hat v_k$.

\paragraph{Energy-Truncation Error}
Because each user has its own codebook, there is no codeword-collision component; the
only measure-change penalty stems from truncation. Writing
$\|C_k(u_k)\|_2^2=\bar p'\sum_{i=1}^{n}g_i^2$ with $g_i\sim\mathcal{N}(0,1)$ i.i.d., the
event $\|C_k(u_k)\|_2^2>p$ is equivalent to $\frac{1}{n}\sum_{i=1}^{n}g_i^2>p/\bar p$,
and the union bound over the $k_a$ active users gives
\begin{equation}
	p_0^{(n)}
	\le
	k_a
	\mathbb{P}
	\left[
	\frac{1}{n}\sum_{i=1}^{n}g_i^2
	>
	\frac{p}{\bar p}
	\right]
	=
	k_a\,
	\frac{\Gamma\!\left(\frac{n}{2},\frac{np}{2\bar{p}}\right)}
	{\Gamma\!\left(\frac{n}{2}\right)},
	\label{eq:truncation_component_jsac}
\end{equation}
where the equality uses the fact that $\sum_{i=1}^{n}g_i^2$ follows a chi-square
distribution with $n$ degrees of freedom, and
$\Gamma(s,x)=\int_{x}^{\infty}t^{s-1}e^{-t}\mathrm{d}t$ is the upper incomplete Gamma
function.

\paragraph{Per-User Decoding Error}
Define the effective per-user decoding error probability \cite{MA1} as
\begin{equation}
	\epsilon_{\mathrm{e}}^{(n)}
	=
	\frac{1}{k_a}
	\sum_{k=1}^{k_a}
	\mathbb{P}[\hat u_k\neq u_k],
	\label{eq:effective_error_probability}
\end{equation}
indicating error when not all bits are restored, and, for $t=1,\ldots,k_a$, the event
\begin{equation}
	\mathcal{F}_t
	=
	\left\{
	\sum_{k=1}^{k_a}
	\mathrm{1}\{\hat u_k\neq u_k\}
	=
	t
	\right\},
	\label{eq:Ft_standard_jsac}
\end{equation}
so that, under the non-truncated Gaussian ensemble,
\begin{equation}
	\epsilon_{\mathrm{e}}^{(n)}
	=
	\sum_{t=1}^{k_a}
	\frac{t}{k_a}
	\mathbb{P}[\mathcal{F}_t];
	\label{eq:error_decomposition_standard_jsac}
\end{equation}
the truncation penalty $p_0^{(n)}$ will be added back at the end. To bound $\mathbb{P}[\mathcal{F}_t]$, fix an erroneous user set
$\mathcal{S}_0\subseteq\{1,\ldots,k_a\}$ with $|\mathcal{S}_0|=t$, and define the true
and false codeword sums
$A=\sum_{k\in\mathcal{S}_0}C_k(u_k)$ and
$B=\sum_{k\in\mathcal{S}_0}C_k(\widetilde u_k)$ with $\widetilde u_k\neq u_k$. By
independence of the codebooks, $A,B\sim\mathcal{N}(0,v_tI_n)$ with
$v_t=t\bar p'=t\bar p/n$. The pairwise nearest-neighbor error event is
\begin{equation}
	\|A-B+Z\|_2^2
	<
	\|Z\|_2^2.
	\label{eq:pairwise_error_jsac}
\end{equation}
Conditioning on $(A,Z)$ and applying a Chernoff bound with parameter $\lambda>0$,
followed by averaging over $B\sim\mathcal{N}(0,v_tI_n)$, yields
\begin{equation}
	\begin{aligned}
		&\mathbb{P}
		\left[
		\|A-B+Z\|_2^2<\|Z\|_2^2
		\mid A,Z
		\right] \\
		&\le
		(1+2v_t\lambda)^{-n/2}
		\exp
		\left\{
		\lambda\|Z\|_2^2
		-
		\frac{\lambda}{1+2v_t\lambda}
		\|A+Z\|_2^2
		\right\}.
	\end{aligned}
	\label{eq:pairwise_chernoff_jsac}
\end{equation}
For a fixed $\mathcal{S}_0$, each erroneous user independently selects one wrong
effective index from its own codebook, so the number of false ordered choices is
\begin{equation}
	(2^{b+b_s}-1)^t
	\le
	2^{t(b+b_s)}.
	\label{eq:false_choices_jsac}
\end{equation}
{\em A note on units (bits or nats):} $b$ and $b_s$ are measured in \textbf{bits}, whereas all unsubscripted
logarithms in the exponent analysis are natural, so exponents and rates are in
\textbf{nats per channel use}. A factor $\log 2$ therefore appears whenever $b$ or $b_s$
enters the random-coding exponent and the converse bounds are stated entirely in base-2 and require no conversion. Define
\begin{equation}\label{eq:r1_r2_jsac}
	r_1^{(n)} = \frac{(b+b_s)\log 2}{n}, \quad
	r_2^{(n)}(t) = \frac{1}{n}\log\binom{k_a}{t},
\end{equation}
so that $tr_1^{(n)}$ is the exponent associated with the false ordered choices and
$r_2^{(n)}(t)$ that of selecting the $t$ erroneous users. For $0\le\rho,\rho_1\le1$ and
$v>0$, define
\begin{subequations}
	\begin{align}
		\nu
		&=
		\frac{\rho\lambda}{1+2v\lambda}, \\
		a(\rho,\lambda;v)
		&=
		\frac{\rho}{2}\log(1+2v\lambda)
		+
		\frac{1}{2}\log(1+2v\nu),
		\label{eq:a_def_jsac}\\
		c(\rho,\lambda;v)
		&=
		\rho\lambda
		-
		\frac{\nu}{1+2v\nu},
		\label{eq:c_def_jsac}
	\end{align}
\end{subequations}
and the Gaussian distance exponent
\begin{equation}
	\begin{aligned}
		&\mathcal{E}_0(\rho,\rho_1;v)
		=\\
		&\max_{\lambda>0:\;1-2\rho_1c(\rho,\lambda;v)>0}
		\left\{
		\rho_1a(\rho,\lambda;v)
		+
		\frac{1}{2}
		\log
		\left(
		1-2\rho_1c(\rho,\lambda;v)
		\right)
		\right\},
	\end{aligned}
	\label{eq:E0_jsac}
\end{equation}
whose maximizing $\lambda$ admits the closed form
\begin{equation}
	\lambda^{\star}
	=
	\frac{
		v-1+
		\sqrt{
			(v-1)^2
			+
			4v
			\frac{1+\rho\rho_1}{1+\rho}
		}
	}
	{
		4v(1+\rho\rho_1)
	}.
	\label{eq:lambda_star_jsac}
\end{equation}
Applying Gallager's $\rho$-trick to the false ordered choices and the $\rho_1$-trick to
the erroneous user sets, exactly as in \cite{MA1}, gives
\begin{equation}
	\mathbb{P}[\mathcal{F}_t]
	\le
	\exp
	\left\{
	-n\mathcal{E}_n(t;\bar p)
	\right\},
	\label{eq:Ft_bound_jsac}
\end{equation}
where
\begin{equation}
	\mathcal{E}_n(t;\bar p)
	=
	\max_{0\le\rho,\rho_1\le1}
	\left[
	-\rho\rho_1 t r_1^{(n)}
	-\rho_1r_2^{(n)}(t)
	+
	\mathcal{E}_0
	\left(
	\rho,\rho_1;
	\frac{t\bar p}{n}
	\right)
	\right].
	\label{eq:finite_n_exponent_jsac}
\end{equation}
Combining \eqref{eq:error_decomposition_standard_jsac},
\eqref{eq:truncation_component_jsac}, and \eqref{eq:Ft_bound_jsac} yields the finite-$n$
random-coding achievability bound
\begin{equation}
	\epsilon_{\mathrm{e}}^{(n)}
	\le
	\sum_{t=1}^{k_a}
	\frac{t}{k_a}
	\exp
	\left\{
	-n\mathcal{E}_n(t;\bar p)
	\right\}
	+
	p_0^{(n)}.
	\label{eq:finite_n_achievability_jsac}
\end{equation}

Passing to the many-user regime $k_a=\mu n$ with $t=\theta k_a$, $0\le\theta\le1$, the
quantities in \eqref{eq:r1_r2_jsac} converge as
\begin{subequations}
	\begin{align}
		t r_1^{(n)}
		&\longrightarrow
		\theta\mu(b+b_s)\log 2,
		\label{eq:r1_limit_jsac}\\
		r_2^{(n)}(t)
		&\longrightarrow
		\mu h(\theta),
		\label{eq:r2_limit_jsac}\\
		\frac{t\bar p}{n}
		&\longrightarrow
		\theta\mu\bar p,
		\label{eq:power_limit_jsac}
	\end{align}
\end{subequations}
where $h(\theta)=-\theta\log\theta-(1-\theta)\log(1-\theta)$ is the binary entropy
function in nats. The asymptotic random-coding exponent is therefore
\begin{equation}
	\begin{aligned}
		&\mathcal{E}(\theta;\bar p)
		=\\
		&\max_{0\le\rho,\rho_1\le1}
		\left[
		-\rho\rho_1\theta\mu(b+b_s)\log 2
		-\rho_1\mu h(\theta)
		+
		\mathcal{E}_0
		\left(
		\rho,\rho_1;
		\theta\mu\bar p
		\right)
		\right].
	\end{aligned}
	\label{eq:asymptotic_exponent_jsac}
\end{equation}
For a target effective per-user decoding error probability $\epsilon$, define
\begin{equation}
	\mathcal{E}_{\min}(\bar p,\epsilon)
	=
	\min_{\epsilon\le\theta\le1}
	\mathcal{E}(\theta;\bar p).
	\label{eq:Emin_jsac}
\end{equation}
If there exists $0<\bar p<p$ such that $\mathcal{E}_{\min}(\bar p,\epsilon)>0$, then
{\em the standard $k_a$-user ISAC MAC is asymptotically achievable with effective
	per-user decoding error probability no larger than $\epsilon$}.

\begin{remark}
	{\em In \eqref{eq:asymptotic_exponent_jsac}, the sensing requirement affects the
		achievable exponent only through the false-search-space term
		$\theta\mu(b+b_s)\log 2$, whereas the energy term inside $\mathcal{E}_0(\cdot)$
		remains $\theta\mu\bar p$, determined by the physical block energy alone. Sensing thus
		increases the number of effective alternatives to be decoded without providing
		additional transmit energy.}
\end{remark}

\paragraph{Converse Bound Based on Fano's Inequality and Sum Capacity}
Let $p_{\mathrm{e},k}=\mathbb{P}[\hat w_k\neq w_k]$ and assume
$\frac{1}{k_a}\sum_{k=1}^{k_a}p_{\mathrm{e},k}\le\epsilon$, together with the per-user
sensing constraint
$\mathbb{E}[\|\Theta_k-\widetilde{\Theta}_k\|_2^2]\le\epsilon_s$ for all $k$. Write
$w^{k_a}=(w_1,\ldots,w_{k_a})$ and $\Theta^{k_a}=(\Theta_1,\ldots,\Theta_{k_a})$. By the
chain rule,
\begin{equation}
	\mathcal{I}(w^{k_a},\Theta^{k_a};Y)
	=
	\mathcal{I}(w^{k_a};Y)
	+
	\mathcal{I}(\Theta^{k_a};Y|w^{k_a}).
	\label{eq:chain_converse_jsac}
\end{equation}
For the communication term, applying Fano's inequality user-by-user,
$\mathcal{H}(w_k|Y)\le\mathcal{H}_2(p_{\mathrm{e},k})+p_{\mathrm{e},k}b$, and then the
concavity of $\mathcal{H}_2(\cdot)$ together with
$\frac{1}{k_a}\sum_k p_{\mathrm{e},k}\le\epsilon$ gives
\begin{equation}
	\begin{aligned}
		\mathcal{I}(w^{k_a};Y)
		&\ge
		k_ab
		-
		\sum_{k=1}^{k_a}
		\left[
		\mathcal{H}_2(p_{\mathrm{e},k})
		+
		p_{\mathrm{e},k}b
		\right] \\
		&\ge
		k_a
		\left[
		(1-\epsilon)b
		-
		\mathcal{H}_2(\epsilon)
		\right].
	\end{aligned}
	\label{eq:comm_fano_lower_jsac}
\end{equation}
For the sensing term, the parameters $\Theta_k$ are i.i.d.\ across users and independent
of all messages, so conditioning on $w^{k_a}$ leaves their distribution unchanged;
repeating the conditional rate-distortion argument in the proof of
Proposition~\ref{prop2} for each user (via the chain rule over $k$) yields
\begin{equation}
	\mathcal{I}(\Theta^{k_a};Y|w^{k_a})
	\ge
	k_ab_s.
	\label{eq:sensing_info_lower_jsac}
\end{equation}
Combining \eqref{eq:chain_converse_jsac}--\eqref{eq:sensing_info_lower_jsac},
\begin{equation}
	\mathcal{I}(w^{k_a},\Theta^{k_a};Y)
	\ge
	k_a
	\left[
	(1-\epsilon)b
	+
	b_s
	-
	\mathcal{H}_2(\epsilon)
	\right].
	\label{eq:joint_info_lower_jsac}
\end{equation}
On the other hand, the data-processing inequality gives
$\mathcal{I}(w^{k_a},\Theta^{k_a};Y)\le\mathcal{I}(X_1,\ldots,X_{k_a};Y)$, and under the
per-channel-use energy constraint $p'=p/n$ the AWGN MAC sum mutual information is upper
bounded by
\begin{equation}
	\mathcal{I}(X_1,\ldots,X_{k_a};Y)
	\le
	\frac{n}{2}
	\log_2
	\left(
	1+\frac{k_ap}{n}
	\right).
	\label{eq:sum_capacity_jsac}
\end{equation}
With $k_a=\mu n$, \eqref{eq:joint_info_lower_jsac} and \eqref{eq:sum_capacity_jsac} give
\begin{equation}
	\mu
	\left[
	(1-\epsilon)b
	+
	b_s
	-
	\mathcal{H}_2(\epsilon)
	\right]
	\le
	\frac{1}{2}
	\log_2(1+\mu p),
	\label{eq:fano_sum_converse_jsac}
\end{equation}
or equivalently
\begin{equation}
	p
	\ge
	p_{\mathrm{fano}}
	\triangleq
	\frac{1}{\mu}
	\left[
	2^{
		2\mu
		\left[
		(1-\epsilon)b
		+
		b_s
		-
		\mathcal{H}_2(\epsilon)
		\right]
	}
	-
	1
	\right]_+.
	\label{eq:p_fano_lower_jsac}
\end{equation}

\begin{remark}
	{\em The Fano penalty applies only to the lossless message $w_k$ and hence multiplies
		$b$; the sensing part contributes its full $b_s$ bits through the rate-distortion
		requirement. As $\epsilon\to0$, \eqref{eq:fano_sum_converse_jsac} reduces to
		$\mu(b+b_s)\le\frac{1}{2}\log_2(1+\mu p)$, matching the effective-payload
		interpretation of Section~\ref{sec.sensing_bits}.}
\end{remark}
\begin{figure}[t!]
	\centering
	\includegraphics[width=\columnwidth]{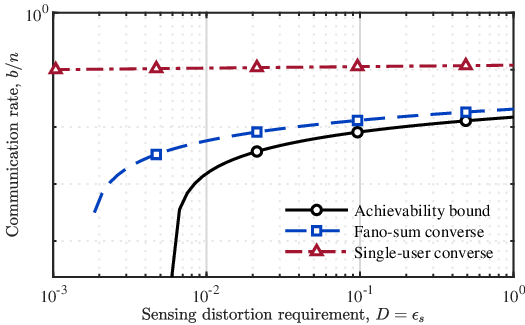}
	\caption{Communication--sensing tradeoff with $n=1000$, $k_a=100$,
		$\epsilon=0.1$, and $e_b/n_0=0$ dB. As the sensing distortion requirement
		$D=\epsilon_s$ tightens, the sensing-equivalent rate $b_s$ grows and the
		achievable communication rate $b/n$ shrinks accordingly, reflecting the
		effective payload relation $b_e=b+b_s$.}
	\label{fig:trade-offs}
		\vspace{-1mm}
\end{figure}
\paragraph{Converse Bound: Single-User Case}
A second converse follows from a genie-aided reduction. Since the average per-user error
probability is at most $\epsilon$, some user $j$ satisfies $p_{\mathrm{e},j}\le\epsilon$.
Revealing all other users' effective indices and transmitted vectors to the receiver,
their contributions can be subtracted from $Y$:
\begin{equation}
	\widetilde Y
	=
	Y-\sum_{k\neq j}X_k
	=
	X_j(w_j,\Theta_j)+Z,
	\label{eq:single_user_reduction_jsac}
\end{equation}
reducing user $j$ to a single-user real AWGN channel with block energy at most $p$,
over which $b+b_s$ effective bits must be conveyed. Since the genie can only help, any
feasible standard-MAC ISAC scheme must satisfy the single-user finite-energy converse:
with $\mathcal{Q}(x)=\mathbb{P}[g>x]$, $g\sim\mathcal{N}(0,1)$, and $\mathcal{Q}^{-1}$
its inverse,
\begin{equation}
	b+b_s
	\le
	-\log_2
	\mathcal{Q}
	\left(
	\sqrt{p}
	+
	\mathcal{Q}^{-1}(1-\epsilon)
	\right),
	\label{eq:single_user_energy_converse_jsac}
\end{equation}
or equivalently
\begin{equation}
	p
	\ge
	p_{\mathrm{energy}}
	\triangleq
	\left(
	\left[
	\mathcal{Q}^{-1}
	\left(
	2^{-(b+b_s)}
	\right)
	-
	\mathcal{Q}^{-1}
	\left(
	1-\epsilon
	\right)
	\right]_+
	\right)^2.
	\label{eq:p_energy_lower_jsac}
\end{equation}

\paragraph{Final Converse Bound}
Both converses must hold simultaneously, so any standard $k_a$-user ISAC MAC scheme
meeting the targets $(\epsilon,\epsilon_s)$ requires
$p\ge\max\{p_{\mathrm{fano}},p_{\mathrm{energy}}\}$ with $p_{\mathrm{fano}}$ and
$p_{\mathrm{energy}}$ given in \eqref{eq:p_fano_lower_jsac} and
\eqref{eq:p_energy_lower_jsac}. Since the physical budget is still the communication
block energy, the minimum-required energy-per-bit is
\begin{equation}
	\frac{e_b}{n_0}
	\ge
	\frac{1}{2b}
	\max
	\left\{
	p_{\mathrm{fano}},
	p_{\mathrm{energy}}
	\right\}.
	\label{eq:final_ebn0_lower_jsac}
\end{equation}

\begin{figure}[t!]
	\centering
	\includegraphics[width=\columnwidth]{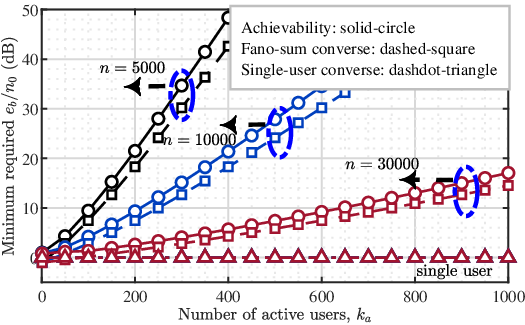}
	\caption{Minimum required $e_b/n_0$ (dB) versus the number of active users
		$k_a$ to meet the targets $(\epsilon,\epsilon_s)$ with $D=10^{-3}$,
		$b=100$, $\epsilon=0.1$, and $n\in\{5000,10^4,3\times10^4\}$. Larger
		blocklengths substantially reduce the energy requirement, and the Fano-sum
		converse closely tracks the multi-user loading trend while the
		single-user converse stays nearly flat.}
	\label{fig:minimum_required}
		\vspace{-1mm}
\end{figure}

\section{Numerical Results}\label{sec.4}
We evaluate the proposed ISAC bounds for a Gaussian remote source with $d=4$
and $\sigma_\theta^2=1$, whose sensing-equivalent rate $b_s$ is given by
\eqref{eq:gaussian_rd}. In all figures the energy efficiency is normalized to
the communication payload as $e_b/n_0 = p/(2b)$, and the sensing fidelity is
treated as a constraint folded into the effective payload $b_e=b+b_s$.

Fig.~\ref{fig:trade-offs} illustrates the communication--sensing tradeoff for
$n=1000$, $k_a=100$, $\epsilon=0.1$, and $e_b/n_0=0$ dB. As the prescribed
distortion $D$ decreases, $b_s$ grows according to the Gaussian rate-distortion
function, and the achievable communication rate $b/n$ decreases monotonically.
This directly reflects the relation $b_e=b+b_s$, where stricter sensing
fidelity consumes a larger share of the fixed effective information load.
\begin{figure}[t!]
	\centering
	\includegraphics[width=\columnwidth]{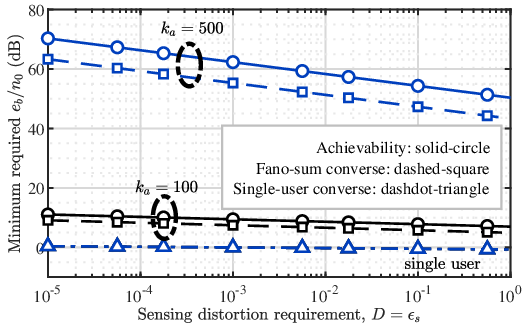}
	\caption{Minimum required $e_b/n_0$ (dB) versus the sensing distortion
		requirement $D$ with $n=5000$, $b=100$, $\epsilon=0.1$, and
		$k_a\in\{100,500\}$. The energy cost grows almost linearly per decade of
		distortion tightening, and the slope steepens with the user load,
		showing that multiple-access interference amplifies the energy price of
		sensing fidelity/distortion.}
	\label{fig:energy_distortion}
\end{figure}

Fig.~\ref{fig:minimum_required} plots the minimum required $e_b/n_0$ versus
the number of active users $k_a$ for $D=10^{-3}$, $b=100$, $\epsilon=0.1$, and
$n\in\{5000,10^4,3\times10^4\}$. The required energy increases monotonically
with $k_a$, while larger blocklengths substantially reduce it, and the
achievability curves lie strictly above the converses. The Fano-sum converse
tightly tracks the multi-user loading trend, whereas the genie-aided
single-user converse remains nearly independent of $k_a$, consistent with its
single-user origin.

Fig.~\ref{fig:energy_distortion} quantifies the energy price of sensing
fidelity by plotting the minimum required $e_b/n_0$ versus $D$ for $n=5000$,
$b=100$, $\epsilon=0.1$, and $k_a\in\{100,500\}$. Since $b_s$ grows
logarithmically in $1/D$, the required energy increases almost linearly per
decade of distortion tightening. The slope is strongly load-dependent.
Tightening $D$ from $1$ to $10^{-5}$ costs about $20$ dB for $k_a=500$ but
only about $4$ dB for $k_a=100$, showing that the energy cost of sensing
accuracy is amplified by multiple-access interference. The single-user
converse is nearly flat, which confirms that this amplification is a genuinely
multi-user effect.

Fig.~\ref{fig:joint_orthogonal} compares the proposed joint encoding, in which
a single codeword carries $b_e=b+b_s$ over the entire blocklength, against an
orthogonal two-phase baseline. In the baseline, the frame is split into a
communication phase of $\alpha n$ channel uses carrying $b$ bits and a sensing
report phase of $(1-\alpha)n$ channel uses carrying $b_s$ bits, each designed
for a target error $\epsilon/2$ so that the union bound meets the overall target $\epsilon$. The minimum energy of each phase is computed from the same
achievability bound, and the total energy is then minimized over the time split $\alpha$. Even against this optimized baseline, joint encoding saves
about $0.7$ dB at small $k_a$ and about $0.3$ dB at $k_a=300$. The gain stems
from coding over the full blocklength, which avoids the dispersion penalty of
two shortened sub-blocks, and from a single shared error budget instead of the
split $\epsilon/2$ targets. Both achievability curves remain above the
Fano-sum converse, and the joint scheme stays uniformly closer to it,
indicating that integrating sensing and communication into one codeword is
strictly more energy-efficient than time-sharing.
\begin{figure}[t!]
	\centering
	\includegraphics[width=\columnwidth]{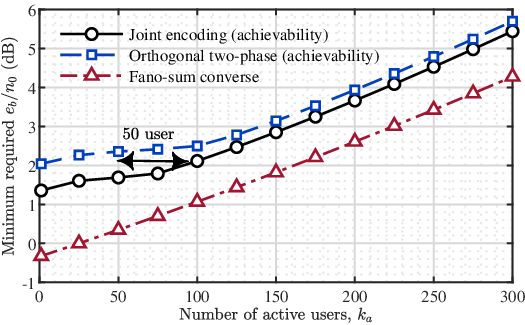}
	\caption{Joint encoding versus the optimized orthogonal two-phase
		baseline with $n=3\times10^4$, $b=100$, $D=10^{-6}$, and $\epsilon=0.1$.
		Carrying $b_e=b+b_s$ in a single codeword over the full blocklength
		consistently outperforms time-sharing even with an optimized split
		$\alpha$, and the joint scheme stays uniformly closer to the Fano-sum
		converse.}
	\label{fig:joint_orthogonal}
	\vspace{-1mm}
\end{figure}

\section{Conclusion}\label{sec.5}
We developed a finite-blocklength ISAC framework for Gaussian
multiple-access channels by proving that sensing-distortion constraints
are equivalent to source-coding requirements, which folds the fidelity/distortion
target into a sensing-equivalent payload and reduces ISAC to
recovering the effective load. We derived an achievability
bound, a Fano-sum many-user converse, and a genie-aided single-user
converse. Numerical results show that stricter sensing accuracy reduces
the achievable rate, the energy price of fidelity is amplified by
multiple-access interference, and joint encoding strictly outperforms
optimized orthogonal time-sharing, quantifying the integration gain of
ISAC in the short-packet regime.

\end{document}